\documentclass[submission,copyright,creativecommons]{eptcs}
\providecommand{\event}{FROM 2026} % Name of the event you are submitting to

\usepackage{iftex}

\ifpdf
  \usepackage{underscore}         % Only needed if you use pdflatex.
  \usepackage[T1]{fontenc}        % Recommended with pdflatex
\else
  \usepackage{breakurl}           % Not needed if you use pdflatex only.
\fi

\usepackage{amsmath}
\usepackage{amsfonts}
\usepackage{amssymb}
\usepackage{amsthm}
\usepackage{graphicx}
\usepackage{hyperref}

\newcommand{\ML}{\mathbb{M}\mathbb{L}}
\newcommand{\iML}{\operatorname{i}\mathbb{M}\mathbb{L}}
\newcommand{\ite}[3]{\operatorname{if}\;#1\;\operatorname{then}\;#2\;\operatorname{else}\;#3}
\newcommand{\abs}[1]{\lvert#1\rvert}
\renewcommand{\phi}{\varphi}

\newcommand{\himp}{\Rightarrow}
\let\oldbigsqcup\bigsqcup
\renewcommand{\bigsqcup}{\mathop{\scalebox{1.2}{\reflectbox{\rotatebox[origin=c]{180}{$\bigsqcap$}}}}\displaylimits}
\newcommand{\bigsqcap}{\mathop{\scalebox{1.2}{\reflectbox{\rotatebox[origin=c]{180}{$\oldbigsqcup$}}}}\displaylimits}
\renewcommand{\operatorname}[1]{\mathsf{#1}} 

\newtheorem{theorem}{Theorem}
\newtheorem{lemma}{Lemma}

\title{Complete Heyting Algebra Semantics for an Intuitionistic Version of Matching Logic \\ \large{(Extended Abstract)}}
\author{Hora\c{t}iu Cheval
\institute{Research Center for Logic, Optimization and Security (LOS), University of Bucharest, Romania}
\institute{Institute for Logic and Data Science, Bucharest, Romania}
\email{horatiu.cheval@unibuc.ro}}
\def\titlerunning{Intuitionistic AML}
\def\authorrunning{H. Cheval}
\begin{document}
\maketitle

\begin{abstract}
  We present work in progress towards an intuitionistic version of Applicative Matching Logic. 
  We introduce a semantics based on complete Heyting algebras, and propose a proof system 
  which we prove to be sound relative to this semantics. 
\end{abstract}

\section{Introduction}

Matching Logic \cite{rosu2017matching, chen2019matching, chen2021explained} is a minimal yet expressive logical system, 
primarily designed for reasoning in a language-independent manner about program executions. 
It serves, moreover, as the logical foundation of the $\mathbb{K}$ program verification framework (\url{https://kframework.org}).
Several versions of Matching Logic have been introduced throughout the years, aiming at increasing the expressivity of the system 
while simultaneously simplifying its syntax and proof system. The version we concern ourselves with in this paper
is the most recent one, also known as \emph{Applicative} Matching Logic, and we will refer to it simply as Matching Logic or $\ML$.
A comprehensive overview is provided in \cite{chen2023matching}.

The goal and main contribution of this paper is to propose an intuitionistic version of $\ML$. 
Let us first briefly recall the syntax and semantics of $\ML$, 
as a reference to compare our intuitionistic system against. 
A distinctive feature of $\ML$ is that its syntax is composed of only one kind of objects,
called \emph{patterns}, which play the role of both terms and formulas from first-order logic. 
Given a signature $\mathbb{S} = (EV, SV, \Sigma)$ consisting of (disjoint)
sets of \emph{element variables}, \emph{set variables}, and \emph{symbols}, the set of patterns over $\mathbb{S}$, $\operatorname{Patt}_{\mathbb{S}}$ is generated by 
\begin{align*}
  \phi, \psi ::= \ 
    &x \in EV & (\text{element variable}) \\
    &\mid X \in SV &(\text{set variable}) \\ 
    &\mid \sigma \in \Sigma & (\text{symbol})\\ 
    &\mid \phi \psi & (\text{application}) \\ 
    &\mid \phi \to \psi & (\text{implication}) \\ 
    &\mid \exists x. \phi & (\text{existential quantifier}) \\ 
    & \mid \mu X. \phi & (\text{least fixed point})
\end{align*}
When constructing $\mu X. \phi$, it is required moreover that $\phi$ is positive for $X$,
meaning that $X$ does not occur in $\phi$ on the left hand-side of an odd number of nested implications.
Patterns denote subsets of a fixed domain $M$, with symbols and application not having an 
a priori interpretation, instead receiving one from the interpreting structure. 
Concretely, a model is a tuple $\mathcal{M} = \left(M, (M_\sigma)_{\sigma \in \Sigma}, \_ \cdot \_\right)$,
where, for any $\sigma \in \Sigma$, $M_\sigma \in \mathcal{P}(M)$ is the interpretation of the symbol $\sigma$,
and $\_ \cdot \_ : M \times M \to \mathcal{P}(M)$ interprets application.
A valuation in $M$ is a mapping $\rho : EV \cup\, SV \to M \cup \mathcal{P}(M)$ such that 
$\rho(x) \in M$, for any $x \in EV$, and $\rho(X) \in \mathcal{P}(M)$ for any $X \in SV$. 
That is to say, element variables stand for single elements of the domain, while set variables stand for subsets.
Then, $\abs{\_}_\rho : \operatorname{Patt}_{\mathbb{S}} \to \mathcal{P}(M)$ 
is defined by 
\begin{equation*}
  \abs{x}_\rho = \{\rho(x)\}, \quad
  \abs{X}_\rho = \rho(X), \quad
  \abs{\sigma}_\rho = M_\sigma, \quad 
  \abs{\phi \to \psi}_\rho = (M \setminus \abs{\phi}_\rho) \cup \abs{\psi}_\rho, 
\end{equation*}
\begin{equation*}
  \abs{\phi \psi}_\rho = \bigcup_{a \in \abs{\phi}_\rho} \bigcup_{b \in \abs{\psi}_\rho} a \cdot b, \quad
  \abs{\exists x. \phi}_\rho = \bigcup_{a \in M} \abs{\phi}_{\rho[x \mapsto a]}, \quad
  \abs{\mu X. \phi}_\rho = \operatorname{lfp}\left(A \in \mathcal{P}(M) \mapsto \abs{\phi}_{\rho[X \mapsto A]}\right) 
\end{equation*}
where $\rho[x \mapsto a]$ denotes the updated valuation 
$\rho[x \mapsto a](y) = \ite{x = y}{a}{\rho(y)}$, and similarly for set variables.
The logical connectives denote their natural set-theoretic counterparts, 
while application is the union of all the pointwise applications between elements of $\phi$ and $\psi$.
The positivity condition in the formation of $\mu X. \phi$ ensures that the mapping $A \mapsto \abs{\phi}_{\rho[X \mapsto A]}$ is monotone,
and subsequently the existence of its least fixed point, by the Knaster-Tarski theorem.

Without writing it down here, let us conclude this section by mentioning that $\ML$ 
possesses a sound and conditionally complete proof system \cite{chen2021explained}.

\section{Intuitionistic $\ML$}

We now proceed to describe the intuitionistic system we consider, which we will refer to as $\iML$. 
In terms of syntax, the only difference is the addition of constructors which are no longer inter-derivable intuitionistically.
Same as in the classical version of $\ML$, a signature $\mathbb{S} = (EV, SV, \Sigma)$ 
contains disjoint sets of element variables, set variables, and symbols. 
Then, the set of patterns over the signature $\mathbb{S}$, $\operatorname{IPatt}_{\mathbb{S}}$ is given by 
\begin{equation*}
  \phi, \psi ::= 
    x \in EV 
    \mid X \in SV 
    \mid \sigma \in \Sigma 
    \mid \phi \psi
    \mid \phi \wedge \psi 
    \mid \phi \vee \psi 
    \mid \phi \to \psi 
    \mid \exists x. \phi 
    \mid \forall x. \phi 
    \mid \mu X. \phi 
    \mid \nu X. \phi 
\end{equation*}
For $\mu X. \phi$ and $\nu X. \phi$, we still require that $\phi$ be positive for $X$.
The constructors 
$\wedge, \vee, \forall$ and $\nu$ are added explicitly to the syntax, as it is common 
for intuitionistic systems.
We do not however include $\bot$ as a primitive, 
as is common in intuitionistic logic,
since instead we can define it by 
$\bot := \mu X. X$ (this may also be done for classical $\ML$).
Negation is then standardly defined as $\neg \phi := \phi \to \bot$.
We denote by $\phi[\psi/X]$ and $\phi[\psi/x]$ the capture-avoiding substitution 
of set variable $X$ (resp. element variable $x$) by the pattern $\psi$ within $\phi$.

\subsection{Semantics}

In classical $\ML$, patterns denote subsets of a fixed domain $M$,
best seen here as functions $M \to \{0, 1\}$.
The idea of our semantics for intuitionistic $\ML$ will be to replace 
$\{0, 1\}$ with a complete Heyting algebra $\mathbb{H}$, 
with patterns then denoting, instead of subsets, mappings $M \to \mathbb{H}$. 

A complete Heyting algebra $\mathbb{H}$ is a complete lattice satisfying the following distributivity law:
for any $x \in \mathbb{H}$ and $A \subseteq \mathbb{H}$, $x \sqcap \bigsqcup_{y \in A} y = \bigsqcup_{y \in A} (x \sqcap y)$. 
In particular, $\mathbb{H}$ admits a Heyting implication definable by $x \himp y = \bigsqcup \{z \mid z \sqcap x \leq y\}$.
The property we mostly rely on is that $x \himp y = \top$ if and only if $x \leq y$. 

We also recall the well-known Knaster-Tarski theorem, which will be used to interpret the fixed point operators.
\begin{theorem}[Knaster-Tarski]\label{thm:kt}
  Let $L$ be a complete lattice and $f : L \to L$ be a monotone function. 
  Then, $f$ has a least and a greatest fixed point, 
  given by $\operatorname{lfp}(f) = \bigsqcap \{ x \in L \mid f(x) \leq x \}$
  and $\operatorname{gfp}(f) = \bigsqcup \{ x \in L \mid x \leq f(x) \}$.
\end{theorem}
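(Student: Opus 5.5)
Let $P = \{ x \in L \mid f(x) \leq x \}$ be the set of pre-fixed points and set $a = \bigsqcap P$, which exists because $L$ is complete. I will show that $a$ is a fixed point and that it lies below every fixed point. The statement about $\operatorname{gfp}(f)$ then follows by the order-dual argument: apply the same reasoning to the opposite lattice $L^{op}$, which is again complete and on which $f$ is still monotone. Under this duality meets become joins and pre-fixed points become post-fixed points $x \leq f(x)$.

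The first step is to prove $f(a) \leq a$. For any $x \in P$ we have $a \leq x$, so monotonicity gives $f(a) \leq f(x) \leq x$. Thus $f(a)$ is a lower bound of $P$, and since $a$ is the greatest lower bound, $f(a) \leq a$. In particular $a \in P$.

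The second step is the reverse inequality. Applying $f$ to $f(a) \leq a$ and using monotonicity yields $f(f(a)) \leq f(a)$. Hence $f(a) \in P$, so $a \leq f(a)$ because $a$ is a lower bound of $P$. Combined with the first step, $f(a) = a$.

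Finally, for leastness: any fixed point $y$ satisfies $f(y) = y \leq y$, so $y \in P$ and $a \leq y$. Therefore $a = \operatorname{lfp}(f)$. For the greatest fixed point, the dual argument takes $b = \bigsqcup\{x \mid x \leq f(x)\}$. One shows $b \leq f(b)$ because $f(b)$ is an upper bound of the post-fixed points, then $f(b) \leq b$ by applying $f$ once more, and maximality follows because every fixed point is a post-fixed point. There is no real obstacle. The only point requiring care is the order of the two inequalities: $a \in P$ must be established first, since it is what lets us feed $f(a)$ back into $P$ in the second step.
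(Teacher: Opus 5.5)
Your proof is correct. It is the standard Tarski argument, including the order-dual treatment of $\operatorname{gfp}$. The paper gives no proof of this theorem and only recalls it as well known, so there is no different route to compare against; note that defining $\operatorname{lfp}(f)$ as $\bigsqcap P$ directly gives $\operatorname{lfp}(f) \leq x$ for every pre-fixed point $x$, which is exactly the form the paper uses for $(\operatorname{KNASTER\text{-}TARSKI})$ on the complete lattice $\mathbb{H}^M$.
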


A model is again a tuple $\mathcal{M} = \left(\mathbb{H}, M, (M_\sigma)_{\sigma \in \Sigma}, \_ \cdot \_\right)$, where
\begin{itemize}
  \item $\mathbb{H}$ is a complete Heyting algebra,
  \item $M$ is a set,
  \item for any $\sigma \in \Sigma$, $M_\sigma : M \to \mathbb{H}$,
  \item $\_ \cdot \_ : M \times M \to \mathbb{H}^M$.
\end{itemize}
An $\mathcal{M}$-valuation is a pair of functions $\rho = (\rho_{EV} : EV \to M, \rho_{SV} : SV \to \mathbb{H}^M)$,
assigning values to element and set variables.
Given such a valuation, the interpretation of patterns 
$\abs{\_}_\rho : \operatorname{IPatt}_{\mathbb{S}} \to \mathbb{H}^M$ is defined recursively, as follows: 
\begin{align*}
  \abs{x}_\rho &= m \mapsto \ite{\rho_{EV}(x) = m}{\top}{\bot} \\
  \abs{X}_\rho &= \rho_{SV}(X) \\ 
  \abs{\sigma}_\rho &= M_\sigma \\ 
  \abs{\phi \psi}_\rho &= m \mapsto \bigsqcup_{a \in M} \bigsqcup_{b \in M} \abs{\phi}_\rho(a) \sqcap \abs{\psi}_\rho(b) \sqcap (a \cdot b)(m) \\
  % \abs{\bot}_\rho (m) &= \bot \\
  \abs{\phi \wedge \psi}_\rho &= m \mapsto \abs{\phi}_\rho(m) \sqcap \abs{\psi}_\rho(m) \\
  \abs{\phi \vee \psi}_\rho  &= m \mapsto \abs{\phi}_\rho(m) \sqcup \abs{\psi}_\rho(m) \\
  \abs{\phi \to \psi}_\rho &= m \mapsto \abs{\phi}_\rho(m) \himp \abs{\psi}_\rho(m) \\
  \abs{\exists x. \phi}_\rho  &= m \mapsto \bigsqcup_{a \in M} \abs{\phi}_{\rho[x \mapsto a]}(m) \\
  \abs{\forall x. \phi}_\rho &= m \mapsto \bigsqcap_{a \in M} \abs{\phi}_{\rho[x \mapsto a]}(m) \\
  \abs{\mu X. \phi}_\rho &= \operatorname{lfp}(A \in \mathbb{H}^M \mapsto \abs{\phi}_{\rho[X \mapsto A]}) \\
  \abs{\nu X. \phi}_\rho &= \operatorname{gfp}(A \in \mathbb{H}^M \mapsto \abs{\phi}_{\rho[X \mapsto A]})
\end{align*} 

The clauses for the logical connectives are easily observed to be the natural (pointwise) Heyting counterparts of the classical set-theoretic operations. 
The same motivation becomes apparent for application, if we unfold its classical semantics into 
$m \in \abs{\phi \psi}_\rho \leftrightarrow \exists a, b \in M \left(a \in \abs{\phi}_\rho \wedge b \in \abs{\psi}_{\rho} \wedge m \in a \cdot b\right)$.
As in the classical setting, the clauses for the $\mu$ and $\nu$ constructors rely on the syntactical positivity condition, 
as made precise by the following technical lemma, which we state without proof here. 
\begin{lemma}
  Let $\phi$ be a pattern and $X$ be a variable such that $\phi$ is positive for $X$. 
  Define $F : \mathbb{H}^M \to \mathbb{H}^M$ by $F(A) = \abs{\phi}_{\rho[X \mapsto A]}$,
  as used in the definition of the semantics. 
  Then, $F$ is monotone.
\end{lemma}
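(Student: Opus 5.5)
We prove a stronger statement by simultaneous structural induction on $\phi$: for every pattern $\phi$, every set variable $X$ and every valuation $\rho$, if $\phi$ is positive for $X$ then $A \mapsto \abs{\phi}_{\rho[X \mapsto A]}$ is monotone, and if $\phi$ is negative for $X$ (every occurrence of $X$ lies under an odd number of implication left-hand sides) then this map is antitone. Here $\mathbb{H}^M$ carries the pointwise order. The simultaneous induction is needed because of the implication case. The induction ranges over all valuations $\rho$, since the binder cases change $\rho$.

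\textbf{Base and pointwise cases.} For $x$ and $\sigma$ the map is constant, hence both monotone and antitone. For $Y \neq X$ it is also constant. For $X$ itself the map is the identity, which is monotone, and $X$ is positive but not negative for $X$. For $\wedge$ and $\vee$ we use that $\sqcap$ and $\sqcup$ are monotone in each argument. For application, $\abs{\phi\psi}_\rho(m)$ is a join of meets in which $\abs{\phi}_\rho(a)$ and $\abs{\psi}_\rho(b)$ occur positively. Since $\sqcap$ and $\bigsqcup$ are monotone, the induction hypothesis for $\phi$ and $\psi$ transfers directly. For $\exists x$ and $\forall x$ we apply the induction hypothesis to $\phi$ with the valuation $\rho[x\mapsto a]$, which commutes with updating $X$, and then use monotonicity of $\bigsqcup$ and $\bigsqcap$.

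\textbf{Implication.} The key lattice fact is that $u \himp v$ is antitone in $u$ and monotone in $v$. This follows from the adjunction $z \leq u\himp v \iff z\sqcap u\leq v$, or directly from the defining join: if $u\le u'$ then any $z$ with $z\sqcap u'\le v$ also satisfies $z \sqcap u \le v$. If $\phi\to\psi$ is positive for $X$, then $\phi$ is negative and $\psi$ is positive. The induction hypothesis then gives that $\abs{\phi}$ is antitone and $\abs{\psi}$ is monotone, so the composite is monotone. The negative case is symmetric.

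\textbf{Fixed points (main obstacle).} Take $\mu Y.\phi$ with $Y\neq X$; if $Y = X$, then $X$ is not free and the map is constant. For $\mu Y.\phi$ to be well formed, $\phi$ must be positive for $Y$. Suppose $A\le B$ and write $G_A(C)=\abs{\phi}_{\rho[X\mapsto A][Y\mapsto C]}$. The induction hypothesis applied to $X$ in $\phi$, valid for every valuation, gives $G_A(C)\le G_B(C)$ for all $C$ when the pattern is positive for $X$. Now use the Knaster--Tarski characterization (Theorem~\ref{thm:kt}): every $C$ with $G_B(C)\le C$ also satisfies $G_A(C)\le C$. So the set of prefixed points of $G_B$ is contained in that of $G_A$, and therefore $\operatorname{lfp}(G_A)\le \operatorname{lfp}(G_B)$. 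Dually, every postfixed point of $G_A$ is a postfixed point of $G_B$, which gives $\operatorname{gfp}(G_A)\le\operatorname{gfp}(G_B)$. The antitone cases swap the inclusions in the same way.

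The monotonicity of $G_A$ in $C$, which makes these fixed points exist, is itself the induction hypothesis applied to $Y$ in $\phi$. This is why the induction must quantify over all set variables and all valuations at once. The remaining care is checking that this argument uses only the comparison of prefixed and postfixed point sets, not any continuity of the maps.
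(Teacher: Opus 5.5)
Your proposal is correct. There is no paper proof to compare it with: the paper explicitly states this lemma without proof.

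Your argument is the standard simultaneous positive/negative induction. It makes explicit what the paper's one-line positivity definition leaves implicit, namely the companion antitone statement for negative occurrences, which the $\to$ case needs.

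Your fixed-point step is also sound as written. Theorem~\ref{thm:kt} gives $\operatorname{lfp}(G)=\bigsqcap\{C \mid G(C)\leq C\}$ and $\operatorname{gfp}(G)=\bigsqcup\{C \mid C\leq G(C)\}$ by explicit formulas. So inclusion of the prefixed (resp.\ postfixed) point sets yields the inequality without ever using monotonicity of $G_A$ in $C$.

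For that reason, your closing remark that the induction must quantify over all set variables is slightly overstated. It suffices to fix $X$ and quantify over all valuations $\rho$, which is genuinely needed for the $\exists$, $\forall$, $\mu$ and $\nu$ cases. Monotonicity in the bound variable $Y$ matters only for knowing that $\abs{\mu Y.\phi}_\rho$ is actually a fixed point. That is a separate application of the lemma, for example in the soundness of $(\operatorname{PREFIXPOINT})$, not part of this inductive step.
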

Thus, by the Knaster-Tarski Theorem, which still applies in our setting since $\mathbb{H}^M$ forms 
a complete lattice, $F$ has a least and a greatest fixed point.

A pattern $\phi$ is said to be satisfied under the $\mathcal{M}$-valuation $\rho$,
written $\mathcal{M}, \rho \models \phi$, if
for any $m \in M$, $\abs{\phi}_\rho(m) = \top$. 
A model $\mathcal{M}$ satisfies $\phi$ if 
$\mathcal{M}, \rho \models \phi$ for every $\mathcal{M}$-valuation $\rho$,
and it satisfies a set of patterns $\Gamma$ if it satisfies all patterns in $\Gamma$. 
We write $\mathcal{M} \models \phi$ and $\mathcal{M} \models \Gamma$ for this. 
Finally, for a set of patterns $\Gamma$, we say that $\Gamma \models \phi$ if 
for any model $\mathcal{M}$, $\mathcal{M} \models \Gamma$ implies $\mathcal{M} \models \phi$.

\subsection{Proof system}

We now present the proposed Hilbert-style proof system for $\iML$, which compares to 
the classical one in the following way.
First, we use a standard \cite{troelstra1973metamathematical} intuitionistic formulation for the propositional and first-order parts of the proof system.
Such an axiomatization was also recently studied in relation to $\ML$ in \cite{leustean2025matching}. 
The least fixed point rules from $\ML$ are split into corresponding least and greatest fixed point rules,
since, as far as we can tell, their classical inter-derivability needs double negation elimination. 
Finally, the proof rules related to application are left the same as in the classical setting,
since, as we will show, they are intuitionistically valid.

Before giving the full proof system, 
we need to introduce the notion of an application context which appears in the $(\operatorname{SINGLETON})$ axiom.
Intuitively, an application context represents a pattern comprised only of nested applications, 
starting from a hole which may be plugged by any pattern. Formally, they can be described by the following grammar:
\begin{equation*}
  C ::= \square \mid \phi C \mid C \phi, 
\end{equation*}
For a pattern $\psi$ and an application context $C$, 
we denote by $C[\psi]$ the pattern obtained by replacing the hole in $C$ by $\psi$,
defined by recursion on $C$:
\begin{align*}
  \square[\psi] &= \psi \\ 
  (\phi C)[\psi] &= \phi (C[\psi]) \\ 
  (C \phi)[\psi] &= (C[\psi]) \phi
\end{align*}
Let us point out that these definitions are identical to the classical case. 
The proof rules we consider are given in Figure~\ref{fig:proof-rules}.

\mbox{}\\[4mm]
The absence of the typical $(\operatorname{EXFALSO})$ rule asserting that 
$\bot \to \phi$ is due to the fact that, 
with our definition of $\bot$ as $\mu X. X$, it can be easily derived from 
$(\operatorname{KNASTER\text{-}TARSKI})$: \\[2mm]
\begin{tabular}{lll}
  (1) $\vdash \phi \to \phi$ & ($\operatorname{WEAKENING}$), ($\operatorname{CONTRACTION}$) and ($\operatorname{SYLLOGISM}$) \\ 
  (2) $\vdash X[\phi/X] \to \phi$ & (since $X[\phi/X] = \phi$) \\ 
  (3) $\vdash \mu X. X \to \phi$ & ($\operatorname{KNASTER\text{-}TARSKI}$): (2) \\  
  (4) $\vdash \bot \to \phi$ & by the definition of $\bot$ 
\end{tabular} 

We also do not include the $(\operatorname{PROPAGATION}_\bot)$ rule
encountered in some formulations of Matching Logic, 
since it is known that it can be derived in presence of $(\operatorname{FRAMING})$ and $(\operatorname{SINGLETON})$,
and the proof can still be carried out intuitionistically. 
See \cite[Proposition~3.3]{chen2023matching} for a proof.

\section{Soundness}
In this section, we sketch the soundness of the proof system from Figure \ref{fig:proof-rules} 
with respect to the semantics we introduced. 
We point out that the proof was also formalized in the Lean theorem prover,
mostly autonomously with interactive user feedback using Claude 4.6 and 4.8 Opus, running in Claude Code.\footnote{The development is available at \url{https://github.com/hcheval/IntuitionisticML}.}
This was not an autoformalization of a result fully written on paper,
instead happening in sync with developing the theory,
and the process helped us rule out some early unsound or otherwise degenerate versions of the system.
The following is a standard technical lemma that is used in the proof.
\begin{lemma}
  For any patterns $\phi$ and $\psi$, and any valuation $\rho$,
  \begin{align*}
    \abs{\phi[\psi/X]}_\rho = \abs{\phi}_{\rho[X \mapsto \abs{\psi}_\rho]}.
  \end{align*}
\end{lemma}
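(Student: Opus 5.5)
We prove the statement by structural induction on $\phi$, simultaneously for all valuations $\rho$; the induction hypothesis must be available for every valuation, since the binder cases change $\rho$. Before starting we record two auxiliary facts, each proved by a separate routine induction on patterns. The first is the \emph{coincidence lemma}: if two valuations agree on the free variables of a pattern $\chi$, then $\abs{\chi}$ is the same under both. The second is the pair of commutation identities $\rho[x\mapsto a][X\mapsto A] = \rho[X\mapsto A][x\mapsto a]$ and $\rho[Y\mapsto B][X\mapsto A] = \rho[X\mapsto A][Y\mapsto B]$ for $X\neq Y$. The latter is immediate from the definition of update, because element and set variables are disjoint.

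The base cases are immediate. If $\phi = X$, both sides equal $\abs{\psi}_\rho$. If $\phi$ is $Y\neq X$, an element variable $x$, or a symbol $\sigma$, substitution does nothing, and the update at $X$ does not affect $\rho_{EV}$ or $\rho_{SV}(Y)$. The cases $\phi_1\phi_2$, $\wedge$, $\vee$ and $\to$ follow directly from the induction hypothesis for the same $\rho$. In each of them the semantic clause is a pointwise expression, built from $\sqcap$, $\sqcup$, $\himp$ and suprema over $M$, in the interpretations of the immediate subpatterns, and substitution distributes over these constructors.

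For $\exists x.\phi_1$ and $\forall x.\phi_1$, capture-avoiding substitution lets us assume, after $\alpha$-renaming, that $x$ is not free in $\psi$. (We assume that interpretation is invariant under $\alpha$-renaming, or work up to $\alpha$-equivalence as is standard.) Then $(\exists x.\phi_1)[\psi/X] = \exists x.(\phi_1[\psi/X])$. For each $a\in M$, the induction hypothesis at the valuation $\rho[x\mapsto a]$ gives
\begin{equation*}
\abs{\phi_1[\psi/X]}_{\rho[x\mapsto a]} = \abs{\phi_1}_{\rho[x\mapsto a][X\mapsto \abs{\psi}_{\rho[x\mapsto a]}]}.
\end{equation*}
By the coincidence lemma, $\abs{\psi}_{\rho[x\mapsto a]} = \abs{\psi}_\rho$ because $x$ is not free in $\psi$. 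The commutation identity then turns the right-hand side into $\abs{\phi_1}_{\rho[X\mapsto\abs{\psi}_\rho][x\mapsto a]}$. Taking $\bigsqcup_a$ (respectively $\bigsqcap_a$) at each point $m$ gives the claim.

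The fixed-point cases $\mu Y.\phi_1$ and $\nu Y.\phi_1$ are the main point of care. If $Y = X$, then $X$ is not free, so substitution is the identity. The right-hand side agrees with the left by the coincidence lemma, or directly because the inner update $[X\mapsto A]$ overrides the outer one. If $Y\neq X$, we rename $Y$ so that it is not free in $\psi$. Note that substituting $\psi$ for $X$ preserves positivity in $Y$, since $Y$ does not occur in $\psi$, so the substituted pattern is well formed. The key step is to show that the two functions of which we take fixed points are \emph{equal}, not merely to compare the fixed points. For every $A\in\mathbb{H}^M$, the induction hypothesis at $\rho[Y\mapsto A]$, followed by the coincidence lemma (as $Y$ is not free in $\psi$) and the commutation identity, gives
\begin{equation*}
\abs{\phi_1[\psi/X]}_{\rho[Y\mapsto A]} = \abs{\phi_1}_{\rho[X\mapsto\abs{\psi}_\rho][Y\mapsto A]}.
\end{equation*}
The two maps $A\mapsto\cdots$ therefore coincide, and so do their least and greatest fixed points. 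The only subtlety is that the induction hypothesis must be used at every updated valuation; that is why the induction quantifies over all $\rho$.
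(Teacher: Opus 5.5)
Your proof is correct, and it is the standard argument. The paper gives no proof of this lemma and only calls it standard. Your version handles the two points that matter: the induction is generalized over all valuations, and in the fixed-point case you show that the two maps $A \mapsto \cdots$ are equal, which immediately gives equality of their least and greatest fixed points. One precision: after $\alpha$-renaming, the body of a binder is no longer a literal subterm of $\phi$, so the induction should be on the size of $\phi$ (or on $\alpha$-equivalence classes); your stated convention of working up to $\alpha$-equivalence already covers this.
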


The following is a technical result that will be used in the soundness proof.
\begin{lemma}\label{lem:fill-le-iSup}
Let $C$ be an application context and $\psi$ be a pattern. Then, for any valuation $\rho$ and any $m \in M$, 
we have that 
$\abs{C[\psi]}_\rho(m) \leq \bigsqcup_{a \in M} \abs{\psi}_\rho(a)$.
\end{lemma}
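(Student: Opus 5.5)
We argue by structural induction on the application context $C$, proving the claim simultaneously for all $m \in M$ (the valuation $\rho$ and the pattern $\psi$ stay fixed). Write $S := \bigsqcup_{a \in M} \abs{\psi}_\rho(a)$ for the right-hand side. The point is that this bound does not depend on $m$, so the induction hypothesis can be applied at whatever point is needed.

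\emph{Base case} $C = \square$. Here $C[\psi] = \psi$, and $\abs{\psi}_\rho(m) \leq S$ holds because $m$ is one of the indices of the supremum.

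\emph{Case} $C = \phi C'$. Then $C[\psi] = \phi(C'[\psi])$, and unfolding the semantics of application gives
\begin{equation*}
  \abs{C[\psi]}_\rho(m) = \bigsqcup_{a \in M} \bigsqcup_{b \in M} \abs{\phi}_\rho(a) \sqcap \abs{C'[\psi]}_\rho(b) \sqcap (a \cdot b)(m).
\end{equation*}
It suffices to bound every term of this double supremum by $S$. Each term is a meet, so it lies below its middle component $\abs{C'[\psi]}_\rho(b)$. By the induction hypothesis for $C'$, applied at the point $b$, this component is at most $S$. Hence the whole supremum is at most $S$.

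\emph{Case} $C = C' \phi$. This is symmetric: $C[\psi] = (C'[\psi])\phi$, and each term $\abs{C'[\psi]}_\rho(a) \sqcap \abs{\phi}_\rho(b) \sqcap (a\cdot b)(m)$ lies below $\abs{C'[\psi]}_\rho(a) \leq S$ by the induction hypothesis at $a$.

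No distributivity and no Heyting structure is needed; the argument uses only that meets are below their components and that suprema are least upper bounds. The only step needing care is that the induction must be stated for all $m$, since the hypothesis is used at the points $b$ (or $a$), not at $m$. Stating the claim with the universal quantifier over $m$ inside the induction settles this, which is exactly how the lemma is phrased.
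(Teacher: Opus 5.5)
Your proof is correct and follows the paper's argument: induction on $C$, bounding each term of the double supremum by applying the induction hypothesis at the inner point (the paper bounds $\abs{C'[\psi]}_\rho(c)$ by $\bigsqcup_{a \in M} \abs{\psi}_\rho(a)$ for arbitrary $c$), with the $C'\phi$ case symmetric. Your remark that the hypothesis must be quantified over all $m$ makes explicit what the paper uses implicitly.
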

\begin{proof}
  By induction on $C$. If $C = \square$, then $C[\psi] = \psi$ so the claim is trivial. 
  Suppose now that $C = \phi C'$. For any $b, c \in M$,
  \begin{align*}
    \abs{\phi}_\rho(b) \sqcap \abs{C'[\psi]}_\rho(c) \sqcap (b \cdot c)(m) 
    &\leq \abs{\phi}_\rho(b) \sqcap \bigsqcup_{a \in M} \abs{\psi}_\rho(a) \sqcap (b \cdot c)(m) \\
    &\leq \bigsqcup_{a \in M} \abs{\psi}_\rho(a) 
  \end{align*} 
  Therefore, we can conclude that 
  \begin{align*}
    \abs{C[\psi]}_\rho(m) = \abs{(\phi C')[\psi]}_\rho(m) 
    &= \bigsqcup_{a \in M} \bigsqcup_{b \in M} \abs{\phi}_\rho(a) \sqcap \abs{C'[\psi]}_\rho(b) \sqcap (a \cdot b)(m) \\
    &\leq \bigsqcup_{a \in M} \abs{\psi}_\rho(a)
  \end{align*}
  The case of $C = C'\phi$ is analogous.
\end{proof}

\begin{theorem}
  Let $\mathbb{S}$ be a signature 
  and $\Gamma$ be a set of patterns over $\mathbb{S}$.
  If $\Gamma \vdash \phi$, then $\Gamma \models \phi$.
\end{theorem}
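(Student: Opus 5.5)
We argue by induction on the derivation of $\Gamma \vdash \phi$. Fix a model $\mathcal{M}$ with $\mathcal{M} \models \Gamma$. We show that every axiom is satisfied in $\mathcal{M}$ under every valuation, and that every rule preserves satisfaction in $\mathcal{M}$. For rules that quantify over valuations (generalization, Knaster--Tarski) we keep the induction hypothesis in the strong form ``$\mathcal{M} \models \psi$'', that is, valid under all valuations, rather than under a fixed $\rho$. Hypotheses in $\Gamma$ are immediate. Throughout we use the fact recalled above that $\abs{\phi \to \psi}_\rho(m) = \top$ iff $\abs{\phi}_\rho(m) \le \abs{\psi}_\rho(m)$. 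This reduces each implicational axiom to a pointwise inequality in $\mathbb{H}$, and it makes modus ponens immediate: $\top \le \abs{\phi}_\rho(m)$ together with $\abs{\phi}_\rho(m)\le\abs{\psi}_\rho(m)$ gives $\abs{\psi}_\rho(m)=\top$.

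\textbf{Propositional and first-order part.} The intuitionistic propositional axioms hold pointwise because every Heyting algebra validates intuitionistic propositional logic. For the quantifier axioms and rules we use the substitution lemma together with its element-variable analogue, $\abs{\phi[y/x]}_\rho = \abs{\phi}_{\rho[x\mapsto \rho(y)]}$, proved by the same induction. With these, the axiom $\phi[y/x]\to\exists x.\phi$ is the inequality $\abs{\phi}_{\rho[x\mapsto\rho(y)]}(m)\le\bigsqcup_a \abs{\phi}_{\rho[x\mapsto a]}(m)$, and dually $\forall x.\phi \to \phi[y/x]$. For the generalization rules with side condition $x\notin FV(\psi)$, say from $\phi\to\psi$ infer $(\exists x.\phi)\to\psi$, we use $\abs{\psi}_{\rho[x\mapsto a]}=\abs{\psi}_\rho$ and the supremum property. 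Any axiom mixing $\forall$ with $\vee$ or $\to$ is handled by the infinite distributivity law of a complete Heyting algebra.

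\textbf{Fixed points.} For the pre-fixpoint axiom $\phi[\mu X.\phi/X]\to\mu X.\phi$, the substitution lemma rewrites the left side as $F(\operatorname{lfp}F)=\operatorname{lfp}F$. For $(\operatorname{KNASTER\text{-}TARSKI})$, from $\mathcal{M}\models\phi[\psi/X]\to\psi$ we get, for each $\rho$, $F(\abs{\psi}_\rho)\le\abs{\psi}_\rho$, with $F(A)=\abs{\phi}_{\rho[X\mapsto A]}$; Theorem~\ref{thm:kt} then yields $\operatorname{lfp}F\le\abs{\psi}_\rho$. Monotonicity of $F$ comes from the positivity lemma. The $\nu$ rules are dual.

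\textbf{Application rules.} For the propagation axioms for $\vee$ and $\exists$, e.g. $(\phi\vee\psi)\chi\to\phi\chi\vee\psi\chi$ and $(\exists x.\phi)\psi\to\exists x.(\phi\psi)$ with $x$ not free in $\psi$, we distribute the binary meets over the joins, using the complete Heyting distributivity law in both arguments. For $(\operatorname{FRAMING})$, from $\mathcal{M}\models\phi\to\psi$ we get $\abs{\phi}_\rho\le\abs{\psi}_\rho$ pointwise, and application is monotone in each argument. For $(\operatorname{SINGLETON})$, which I expect to be the main obstacle, the aim is to show $\abs{\neg(C_1[x\wedge\phi]\wedge C_2[x\wedge\neg\phi])}_\rho(m)=\top$, i.e. $\abs{C_1[x\wedge\phi]}_\rho(m)\sqcap\abs{C_2[x\wedge\neg\phi]}_\rho(m)=\bot$. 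Here we use that $\abs{\bot}_\rho$ is constantly $\bot$, since it is the lfp of the identity. By Lemma~\ref{lem:fill-le-iSup}, the two factors are bounded by $\bigsqcup_a\abs{x\wedge\phi}_\rho(a)=\abs{\phi}_\rho(\rho(x))$ and $\bigsqcup_a\abs{x\wedge\neg\phi}_\rho(a)=\abs{\phi}_\rho(\rho(x))\himp\bot$, because $\abs{x}_\rho$ is $\top$ only at $\rho(x)$ and $\bot$ elsewhere. Their meet is $h\sqcap(h\himp\bot)=\bot$. The point needing care is that no classical reasoning (such as $h=\top$ or $h=\bot$) is used.
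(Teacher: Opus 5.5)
Your proposal is correct and follows the paper's proof essentially case by case: induction on the derivation, Lemma~\ref{lem:fill-le-iSup} together with $h \sqcap (h \himp \bot) = \bot$ for $(\operatorname{SINGLETON})$, and Knaster--Tarski for the fixed-point rules; computing the two suprema directly as $h$ and $h \himp \bot$ is just a tidier form of the paper's case split on $a,b$. Two small remarks: the cases you skip follow easily, $(\operatorname{SUBSTITUTION})$ from the substitution lemma under your all-valuations induction hypothesis and $(\operatorname{EXISTENCE})$ by taking $a = m$; however, your aside that any axiom mixing $\forall$ with $\vee$ follows from infinite distributivity is false in general (the constant-domain axiom $\forall x.(\phi \vee \psi) \to (\forall x.\phi) \vee \psi$ with $x \notin FV(\psi)$ needs the dual law $h \sqcup \bigsqcap_i k_i = \bigsqcap_i (h \sqcup k_i)$, which a complete Heyting algebra need not satisfy), and this is harmless only because no such axiom is in the system.
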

\begin{proof}
  We argue by induction on the derivation. 
  For space constraints, we only sketch the more interesting cases. 
  For the propositional and first-order cases, the rules and semantics considered here are standard, 
  so we omit them. The $(\operatorname{EXISTENCE})$ rule
  can also be easily treated similarly to the classical case.
  We only consider the left case for the 
  $(\operatorname{FRAMING})$, $(\operatorname{PROPAGATION}_\vee)$ and $(\operatorname{PROPAGATION}_\exists)$
  rules as the other is handled analogously.
  The only somewhat unexpected case is that of the $(\operatorname{SINGLETON})$ rule,
  which requires a different strategy than in the classical setting.
  Let, throughout, $\mathcal{M}$ be a model and $\rho$ be an $\mathcal{M}$-valuation. 
  \begin{itemize}
    \item $(\operatorname{PROPAGATION}_\vee)$:  
    We need to show that 
    \begin{align*}
      &\bigsqcup_{a \in M} \bigsqcup_{b \in M} \left(\abs{\phi}_\rho(a) \sqcup \abs{\psi}_\rho(a)\right) \sqcap \abs{\chi}_\rho(b) \sqcap (a \cdot b)(m) \\
      &\quad\leq 
      \left(\bigsqcup_{a \in M} \bigsqcup_{b \in M} \abs{\phi}_\rho(a) \sqcap \abs{\chi}_\rho(b) \sqcap (a \cdot b)(m)\right) 
      \sqcup  
      \left(\bigsqcup_{a \in M} \bigsqcup_{b \in M} \abs{\psi}_\rho(a) \sqcap \abs{\chi}_\rho(b) \sqcap (a \cdot b)(m)\right)
    \end{align*}
    \begin{align*}
      &\bigsqcup_{a \in M} \bigsqcup_{b \in M} \left(\abs{\phi}_\rho(a) \sqcup \abs{\psi}_\rho(a)\right) \sqcap \abs{\chi}_\rho(b) \sqcap (a \cdot b)(m) \\
      &\quad= \bigsqcup_{a \in M} \bigsqcup_{b \in M} \left(\abs{\phi}_\rho(a) \sqcap \abs{\chi}_\rho(b) \sqcap(a \cdot b)(m) \right) 
      \sqcup \left(\abs{\psi}_\rho(a) \sqcap \abs{\chi}_\rho(b) \sqcap (a \cdot b)(m)\right) \\ 
      &\quad= \bigsqcup_{a \in M} \bigsqcup_{b \in M} \left(\abs{\phi}_\rho(a) \sqcap \abs{\chi}_\rho(b) \sqcap(a \cdot b)(m) \right) 
      \sqcup \bigsqcup_{a \in M} \bigsqcup_{b \in M} \left(\abs{\psi}_\rho(a) \sqcap \abs{\chi}_\rho(b) \sqcap (a \cdot b)(m)\right)
    \end{align*}
    which is exactly the desired right-hand side. 
    Note here that we actually got equality instead of just an upper bound, 
    suggesting that the reverse implication also holds. This is indeed the case, as in the classical $\ML$, 
    but the propagation rules suffice to be stated for this direction only, as the other implication can be obtained 
    via $(\operatorname{FRAMING})$.

    \item $(\operatorname{PROPAGATION}_\exists)$: 
    We have that
    \begin{align*}
      &\bigsqcup_{a \in M} \bigsqcup_{b \in M} \abs{\exists x. \phi}_\rho(a) \sqcap \abs{\chi}_\rho(b) \sqcap (a \cdot b)(m) \\ 
      &\quad = \bigsqcup_{a \in M} \bigsqcup_{b \in M} \left( \bigsqcup_{c \in M} \abs{\phi}_{\rho[x \mapsto c]}(a) \right) \sqcap \abs{\chi}_{\rho}(b) \sqcap (a \cdot b)(m) \\ 
      &\quad = \bigsqcup_{a \in M} \bigsqcup_{b \in M} \left( \bigsqcup_{c \in M} \abs{\phi}_{\rho[x \mapsto c]}(a) \sqcap \abs{\chi}_{\rho}(b) \sqcap (a \cdot b)(m) \right) & \text{(frame distributivity)} \\
      &\quad = \bigsqcup_{a \in M} \bigsqcup_{b \in M} \left( \bigsqcup_{c \in M} \abs{\phi}_{\rho[x \mapsto c]}(a) \sqcap \abs{\chi}_{\rho[x \mapsto c]}(b) \sqcap (a \cdot b)(m) \right) & \text{(since $x \notin FV(\chi)$)} \\
      &\quad = \bigsqcup_{c \in M} \bigsqcup_{a \in M} \bigsqcup_{b \in M} \abs{\phi}_{\rho[x \mapsto c]}(a) \sqcap \abs{\chi}_{\rho[x \mapsto c]}(b) \sqcap (a \cdot b)(m) \\
      &\quad = \bigsqcup_{c \in M} \abs{\phi \chi}_{\rho[x \mapsto c]}(m) = \abs{\exists x. \phi \chi}_\rho(m)
    \end{align*}

    \item $(\operatorname{FRAMING})$: 
    We have to prove that $\abs{\phi \chi \to \psi \chi}_\rho(m) = \top$,
    which is equivalent to showing that 
    \begin{equation*}
      \bigsqcup_{a \in M} \bigsqcup_{b \in M} \abs{\phi}_\rho(a) \sqcap \abs{\chi}_\rho(b) \sqcap (a \cdot b)(m) 
      \leq  
      \bigsqcup_{a \in M} \bigsqcup_{b \in M} \abs{\psi}_\rho(a) \sqcap \abs{\chi}_\rho(b) \sqcap (a \cdot b)(m),
    \end{equation*}
    which follows immediately from the Induction Hypothesis, which asserts that for all $m \in M$, 
    $\abs{\phi}_\rho(m) \leq \abs{\psi}_\rho(m)$.

    \item $(\operatorname{SINGLETON})$:
    We need to prove that $\abs{C_1[x \wedge \phi]}_\rho(m) \sqcap \abs{C_2[x \wedge \neg\phi]}_\rho(m) = \bot$.
    By the bound in Lemma~\ref{lem:fill-le-iSup}, it suffices to prove that 
    \begin{align*}
      \bigsqcup_{a \in M} \abs{x \wedge \phi}_\rho(a) 
        \sqcap 
      \bigsqcup_{a \in M} \abs{x \wedge \neg\phi}_\rho(a) = \bot
    \end{align*}
    By frame distributivity, it suffices to show that for any $a, b \in M$,
    \begin{align*}
      \abs{x \wedge \phi}_\rho(a) \sqcap \abs{x \wedge \neg\phi}_\rho(b) = \bot.
    \end{align*}
    By unfolding the definition we get 
    \begin{align*}
      \abs{x \wedge \phi}_\rho(a) \sqcap \abs{x \wedge \neg\phi}_\rho(b) 
      &= \abs{x}_\rho(a) \sqcap \abs{\phi}_\rho(a) \sqcap \abs{x}_\rho(b) \sqcap (\abs{\phi}_\rho(b) \himp \bot) 
    \end{align*}
    If $\rho_{EV}(x) \neq a$, then $\abs{x}_\rho(a) = \bot$, yielding the entire expression above equal to $\bot$, 
    and the same goes for the case when $\rho_{EV}(x) \neq b$. The only case left to consider is thus 
    $a = b = \rho_{EV}(x)$, when we get 
    \begin{align*}
      \abs{x \wedge \phi}_\rho(a) \sqcap \abs{x \wedge \neg\phi}_\rho(b) = \abs{x}_\rho(a) \sqcap \abs{\phi}_\rho(a) \sqcap (\abs{\phi}_\rho(a) \himp \bot) = \bot,
    \end{align*}
    finishing the proof for the $(\operatorname{SINGLETON})$ case, since in general, $h \sqcap (h \himp \bot) = \bot$
    for $h \in \mathbb{H}$.

    \item $(\operatorname{PREFIXPOINT})$ and $(\operatorname{KNASTER\text{-}TARSKI})$: 
      The proof is similar to the classical case, replacing $\mathcal{P}(M)$ with the complete lattice 
      $\mathbb{H}^M$.
      Let $F : \mathbb{H}^M \to \mathbb{H}^M$ be defined by 
      $F(A) = \abs{\phi}_{\rho[X \mapsto A]}$, 
      so that $\abs{\mu X. \phi}_\rho = \operatorname{lfp}(F)$.
      For the $(\operatorname{PREFIXPOINT})$ rule, we need that 
      \begin{align*}
        F(\operatorname{lfp}(F)) = F(\abs{\mu X. \phi}_\rho) = \abs{\phi}_{\rho[X \mapsto \abs{\mu X. \phi}_\rho]} \overset{(*)}{=} \abs{\phi[\mu X. \phi / X]}_\rho \leq \abs{\mu X. \phi}_\rho = \operatorname{lfp}(F),
      \end{align*}
      where $(*)$ is by the substitution lemma, and the inequality actually holds with equality since $\operatorname{lfp}(F)$ is a fixed point of $F$ by Theorem~\ref{thm:kt}.

      For $(\operatorname{KNASTER\text{-}TARSKI})$, suppose that
      $\abs{\phi[\psi/X]}_\rho = \abs{\phi}_{\rho[X \mapsto \abs{\psi}_\rho]} = F(\abs{\psi}_\rho) \leq \abs{\psi}_\rho$.
      We want that $\operatorname{lfp}(F) \leq \abs{\psi}_\rho$.
      From the definition of the least fixed point given by the Knaster-Tarski Theorem \ref{thm:kt},
      it follows immediately, since, by assumption, $\abs{\psi}_\rho$ is a prefixpoint of $F$, 
      that $\operatorname{lfp}(F) \leq \abs{\psi}_\rho$.
      The $\nu$ cases are analogous.

  \end{itemize}

\end{proof}

\section{Conclusions and future work}

To our knowledge, this is the first attempt at an intuitionistic semantics for Matching Logic,
and we believe it represents a starting point for a rich line of future research. 
First of all, while we address the soundness in this paper, we leave open the more difficult question of 
completeness. Even without that, it would be important to validate 
the practical expressivity of the system in terms of the way 
$\ML$ is normally used to specify and to work with different theories, 
by carrying out similar developments in the intuitionistic setting. 
Though we deem our axiomatization to be a natural adaptation of the classical $\ML$ proof system, 
we cannot rule out at this point that one might encounter some missing required proof rules 
when using the system in practice, particularly because the interaction of the purely logical 
and applicative fragments is rather non-standard. 

An immediate first step in this direction would be to study 
in the intuitionistic setting the $\ML$-specific way equality 
is typically introduced as a derived construct, using the theory of definedness. 

Other, more ambitious, potential metatheoretical investigations worth mentioning are
disjunction properties, realizability semantics,
negative translations from $\ML$ to $\iML$, or
the study of Kripke models, which arise in our setting as the particular case where
$\mathbb{H}$ is the algebra of upper sets of a preorder,
and of whether they are already sufficient for completeness.

% \nocite{*}

\newpage \appendix
\section{Proof system for $\iML$}\label{fig:proof-rules}
  
\begin{figure}[h!]
\centering 
{\small
\begin{tabular}{llll}
  $(\operatorname{CONTRACTION})$ & $\phi \vee \phi \to \phi$ & $\phi \to \phi \wedge \phi$ \\[2mm]
  $(\operatorname{WEAKENING})$ & $\phi \to \phi \vee \psi$ & $\phi \wedge \psi \to \phi$ \\[2mm]
  $(\operatorname{PERMUTATION})$ & $\phi \vee \psi \to \psi \vee \phi$ & $\phi \wedge \psi \to \psi \wedge \phi$ \\[2mm]
  $(\operatorname{MODUS\; PONENS})$ & $\dfrac{\phi \quad \phi \to \psi}{\psi}$ \\[4mm]
  $(\operatorname{SYLLOGISM})$ & $\dfrac{\phi \to \psi \quad \psi \to \chi}{\phi \to \chi}$ \\[4mm]
  $(\operatorname{EXPORTATION})$ & $\dfrac{\phi \wedge \psi \to \chi}{\phi \to (\psi \to \chi)}$ \\[4mm]
  $(\operatorname{IMPORTATION})$ & $\dfrac{\phi \to (\psi \to \chi)}{\phi \wedge \psi \to \chi}$ \\[4mm]
  $(\operatorname{EXPANSION})$ & $\dfrac{\phi \to \psi}{\chi \vee \phi \to \chi \vee \psi}$ \\[4mm]
  $(\operatorname{QUANTIFIER})$ & $\phi[y/x] \to \exists x. \phi$ & $\forall x. \phi \to \phi[y/x]$ \\[2mm]
  $(\operatorname{QUANTIFIER\; RULE})$ & $\dfrac{\phi \to \psi}{\exists x. \phi \to \psi}$ & $\dfrac{\psi \to \phi}{\psi \to \forall x. \phi}$, \quad $x \notin FV(\psi)$ \\[4mm]
  $(\operatorname{EXISTENCE})$ & $\exists x. x$ \\[2mm]
  $(\operatorname{SINGLETON})$ & $\neg(C_1[x \wedge \phi] \wedge C_2[x \wedge \neg\phi])$, & where $C_1, C_2$ are application contexts \\[2mm]
  $(\operatorname{PREFIXPOINT})$ & $\phi[\mu X. \phi / X] \to \mu X. \phi$ &
  $\nu X. \phi \to \phi[\nu X. \phi / X]$ \\[2mm]
  $(\operatorname{KNASTER\text{-}TARSKI})$ & $\dfrac{\phi[\psi/X] \to \psi}{\mu X. \phi \to \psi}$
  & $\dfrac{\psi \to \phi[\psi/X]}{\psi \to \nu X. \phi}$ \\[4mm]
  $(\operatorname{SUBSTITUTION})$ & $\dfrac{\phi}{\phi[\psi/X]}$ \\[4mm]
  $(\operatorname{PROPAGATION}_\vee)$ & $(\phi \vee \psi)\chi \to \phi \chi \vee \psi \chi$
  & $\chi(\phi \vee \psi) \to \chi \phi \vee \chi \psi$ \\[2mm]
  $(\operatorname{PROPAGATION}_\exists)$ & $(\exists x. \phi)\chi \to \exists x. \phi \chi$
  & $\chi(\exists x. \phi) \to \exists x. \chi \phi$, \quad where $x \notin FV(\chi)$ \\[2mm]
  $(\operatorname{FRAMING})$ & $\dfrac{\phi \to \psi}{\phi \chi \to \psi \chi}$
  & $\dfrac{\phi \to \psi}{\chi \phi \to \chi \psi}$
\end{tabular}
}
\caption{Proof system for $\iML$.}
\end{figure}

\end{document}